\newtheorem{mythm}{Theorem}
\newtheorem{myassum}[mythm]{Assumption}
\newtheorem{myremark}[mythm]{Remark}
\newtheorem{mycoro}[mythm]{Corollary}
\numberwithin{equation}{section}
\DeclareMathOperator{\E}{\mathbb{E}}
\DeclareMathOperator{\Cov}{Cov}
\DeclareMathOperator{\tr}{tr}
\DeclareMathOperator{\polylog}{polylog}
\newcommand*\dif{\mathop{}\!\mathrm{d}}
\DeclareMathOperator{\Real}{Re}
\title{A quantum gradient descent algorithm for optimizing Gaussian Process models}
\author[1]{Junpeng Hu\thanks{hjp3268@sjtu.edu.cn}}
\author[2]{Jinglai Li\thanks{j.li.10@bham.ac.uk}}
\author[1]{Lei Zhang\thanks{lzhang2012@sjtu.edu.cn}}
\author[1,3]{Shi Jin\thanks{shijin-m@sjtu.edu.cn}}
\affil[1]{School of Mathematical Sciences, Institute of Natural Sciences, MOE-LSC, Shanghai Jiao Tong University, Shanghai, 200240, China}
\affil[2]{School of Mathematics, University of Birmingham, Birmingham, B15 2TT, United Kingdom}
\affil[3]{Shanghai Artificial Intelligence Laboratory, Shanghai, China}
\begin{document}
\maketitle

\begin{abstract}
Gaussian Process Regression (GPR) is a nonparametric  
supervised learning method, widely valued for its ability
to quantify uncertainty. Despite its advantages and broad applications, classical GPR implementations face significant scalability challenges, 
as  they involve matrix operations with a \emph{cubic} complexity in relation to the dataset size.
This computational challenge is further compounded by the  demand of 
optimizing the Gaussian Process model over its hyperparameters,
rendering the total computational cost prohibitive for data intensive problems. 
To address this issue, we propose a quantum gradient descent algorithm 
to optimize the Gaussian Process model. 
Taking advantage of recent advances in quantum algorithms for linear algebra, our algorithm achieves exponential speedup in computing the gradients of the log marginal likelihood. 
The entire gradient descent process is integrated into the quantum circuit. 
Through runtime analysis and error bounds, we demonstrate that our approach significantly improves the scalability of GPR model optimization, making it computationally feasible for large-scale applications.
\end{abstract}


\section{Introduction}
\label{sec:introduction}

Gaussian Process Regression (GPR) is a nonparametric and probabilistic
method for regression and classification~\cite{williams2006gaussian,gramacy2020surrogates}, 
offering robust performance through its principled Bayesian framework and inherent capability to quantify uncertainty. Its applications span a wide range of domains, including astronomy~\cite{foreman2017fast}, geoscience\cite{oliver1990kriging}, psychology~\cite{schulz2018tutorial}, 
material science~\cite{deringer2021gaussian}, 
robotics \cite{deisenroth2013gaussian}, and artificial intelligence (AI)~\cite{shahriari2015taking}, among others. 
A key advantage of GPR is that its kernel-based formulation offers the flexibility to model 
complex functional relationships without presuming a parametric form for the underlying function. However, this flexibility is accompanied by significant computational challenges. 
Namely classical GPR implementations require $\mathcal{O}(N^3)$ operations to compute the inverse of the 
covariance matrix for $N$ training points, making it impractical for large datasets. 
This limitation has historically constrained GPR to relatively small or medium-sized datasets, despite its theoretical advantages over other regression techniques. 
On the other hand, modern real-world applications often involve large-scale, high-dimensional models that demand substantial amounts of data to effectively capture their complexity.
To enable the Gaussian Process (GP) model for large-scale problems, various methods have been developed to reduce this computational burden—such as sparse GP and Blackbox Matrix-Matrix multiplication (BBMM) — but these techniques are inherently approximate and introduce trade-offs in accuracy and scalability. For instance, sparse GP methods~\cite{quinonero2005unifying} reduce computational cost by approximating the full covariance matrix with a subset of inducing points, but their performance heavily depends on the selection and number of these points. Similarly, while BBMM methods can significantly reduce computational costs in many cases, they still rely on certain approximate computations that may lead to inaccurate results~\cite{gardner2018gpytorch}.
Despite these advancements, 
the lack of generally scalable computational methods for GPR limits its applicability to many large-scale problems, necessitating further research on the topic.

Quantum computing is a rapidly advancing computational paradigm with transformative potential: 
 quantum algorithms could offer up to  exponential acceleration over the best-known classical methods~\cite{feynman2018simulating,grover1996fast,shor1994algorithms}. Among these, the quantum algorithm proposed by Harrow, Hassidim, and Lloyd (the HHL algorithm) \cite{harrow2009quantum} for solving linear systems of equations demonstrated an exponential speedup compared to the classical conjugate gradient (CG) method in terms of the matrix size. The original HHL algorithm features a complexity that scales quadratically with the condition number $\kappa$ of the matrix and linearly with the inverse accuracy $1/\varepsilon$. Subsequent works have introduced refinements to mitigate the dependence on both the condition number and the accuracy, enhancing the algorithm's practical applicability \cite{ambainis2012variable,childs2017quantum,costa2022optimal}. In particular, within the framework of block-encoding, the quantum singular value transformation (QSVT) has emerged as a powerful tool for performing matrix arithmetic \cite{gilyen2019quantum}, including efficient matrix inversion. The power of block-encoded matrix powers was studied in \cite{chakraborty2018power}. 
 Since the primary computational bottleneck in GPR arises from matrix operations, particularly matrix inversion, the aforementioned 
 quantum algorithms enable GPR to be implemented on quantum systems with exponential speedup compared to 
 the classical implementation.
 

In this context, considerable efforts have been devoted to developing quantum algorithms for GPR~\cite{zhao2019quantum,das2018quantum,chen2022quantum,farooq2024quantum};
these methods achieve exponential speedup by taking advantage of the quantum algorithms for matrices and linear systems.
Like most machine learning models, the behavior and performance of GPR depend critically on 
its hyperparameters. For all but the simplest problems, the GP model must be optimized with respect to its hyperparameters to ensure its effectiveness. While optimizing the GP model is conceptually straightforward, 
it suffers from the same computational challenge with large datasets: it requires an $N\times N$ matrix inversion in each iteration. 
This issue is considered in \cite{zhao2019quantum}, where the authors propose a quantum algorithm for evaluating the logarithm of the marginal likelihood (LML) of the GP model. In this method, one can only optimize the hyperparameters in a derivative-free manner.
However, gradient-based optimization techniques are generally preferred 
as they are typically more efficient than the derivative-free methods. 
The primary goal of this work 
is to investigate the quantum algorithms for 
optimizing and training GPR, with a particular focus on the gradient-based optimization methods. 
Specifically, we propose a quantum gradient descent method that allows one to optimize the GP model 
using gradient information, 
while preserving the exponential speedup characteristic of quantum computation. A brief comparison between our proposed method and the quantum derivative-free approach by Zhao et al. \cite{zhao2019quantum} is summarized in Table~\ref{tab:comparison}.

\begin{table}[ht]
\centering
\small  
\caption{Comparison between the previous work and the proposed method}
\label{tab:comparison}
\begin{tabular}{l c c c c}
\toprule
\textbf{Method} & 
\textbf{Objective} & 
\textbf{Optimization} & 
\textbf{Convergence} & 
\textbf{Best dependence on $N$} \\
\midrule
Zhao et al. \cite{zhao2019quantum} &
$\log p(\mathbf{y}|\mathbf{X},\boldsymbol{\theta})$ &
Gradient-free &
Linear or worse &
$\polylog(N)$ \\
\midrule
Proposed &
${\frac{\partial}{\partial\theta_{j}}}\log p(\mathbf{y}|\mathbf{X},\boldsymbol{\theta})$ &
Gradient descent &
Linear &
$\polylog(N)$ \\
\bottomrule
\end{tabular}
\end{table}

The structure of the paper is as follows. Section \ref{sec:gp} introduces the GPR framework and discusses the hyperparameter optimization problem.
Section \ref{sec:algorithm} provides a detailed description of the quantum gradient method
for optimizing the GP model. 
Section \ref{sec:analysis} presents a runtime and error analysis of the proposed method. Finally, Section \ref{sec:conclusion}
offers some closing remarks.

\section{GPR and kernel optimization}
\label{sec:gp}

Supervised learning aims to infer the underlying relationship between the input and output of a system using a set of examples, known as the training data. 
GPR provides a nonparametric and probabilistic method for this task. 
Simply put, GPR casts the underlying function as a Gaussian random process, 
and computes the distribution of the function values of interest conditioned on the 
training data in a Bayesian formulation. 
Since GPR obtains a distribution of the function values of interest, it naturally enables uncertainty quantification.
Below, we provide a brief introduction to GPR and refer to \cite{williams2006gaussian} for further details.

\subsection{Overview of GPR}
Given a training set $T= \{ (x_i, y_i)\}^{N-1}_{i=0}$ , where inputs are denoted by $\mathbf{X} := \{x_i\}^{N-1}_{i=0}$, and the corresponding outputs by $\mathbf{y} := \{y_i\}^{N-1}_{i=0}$, we aim to model a latent function $f(x)$ such that
\begin{equation}\label{eqn:gp:noise}
    y = f(x) + \xi,
\end{equation}
where $\xi \sim \mathcal{N}(0, \sigma^2_N )$ represents independent and identically distributed Gaussian observation noise. 
In the GPR framework, the latent function $f(x)$ is assumed to be a Gaussian random process which is fully characterized by its mean function and covariance function:
\begin{equation}
\E[f(x)] = m(x),
\end{equation}
\begin{equation}
    \Cov(f(x),f(x^\prime)) = k(x, x^\prime). 
\end{equation}
For simplicity in practical implementations, the mean function is often assumed to be zero, which can be achieved through appropriate preprocessing of the data. The kernel $k(\cdot,\cdot)$ is positive semidefinite and bounded, and it encapsulates the assumptions about the characteristics of
the target function $f(x)$. 
Now assume that we are interested in the function values at a number of ``test points'' denoted as $\mathbf{X}_*$, and 
let $\mathbf{f}_*= f(\mathbf{X}_*)$. Following the GP assumption, we can derive  
the joint distribution of the observed functions values $\mathbf{y}$ and the function values at the test points $\mathbf{f}_{*}$:   
\begin{equation}\label{eqn:gp:joint}
    \begin{bmatrix}
    \mathbf{y} \\ \mathbf{f}_*
    \end{bmatrix}
    \sim \mathcal{N} \left(\mathbf{0}, 
    \begin{bmatrix}
    K(\mathbf{X}, \mathbf{X}) + \sigma_N^2 \mathbf{I} & K(\mathbf{X}, \mathbf{X}_*) \\
    K(\mathbf{X}_*, \mathbf{X}) & K(\mathbf{X}_*, \mathbf{X}_*)
    \end{bmatrix}
    \right),
\end{equation}
where $K(\mathbf{X}, \mathbf{X})$ denotes the covariance matrix of the training inputs $\mathbf{X}$, $K(\mathbf{X}_{*}, \mathbf{X})$ is the cross-covariance matrix between the test inputs $\mathbf{X}_{*}$ and $\mathbf{X}$, and $K(\mathbf{X}_{*}, \mathbf{X}_{*})$ is the covariance matrix of the test inputs. The term $\sigma_N^2 \mathbf{I}$ accounts for the Gaussian noise in the observed target values.
From the joint distribution~\eqref{eqn:gp:joint}, we can derive the posterior distribution of the function values of interest $\mathbf{f}_*$ conditioned on the
observation data, which is also Gaussian: 
\begin{equation}
    \mathbf{f}_* \mid \mathbf{X}, \mathbf{y}, \mathbf{X}_* \sim \mathcal{N}(\boldsymbol{\mu}_*, \Sigma_*),
\end{equation}
with the posterior mean $\boldsymbol{\mu}_*$ and posterior covariance $\Sigma_*$ given by:  
\begin{equation}
\begin{aligned}
    \boldsymbol{\mu}_* &= K(\mathbf{X}_*, \mathbf{X}) \big[K(\mathbf{X}, \mathbf{X}) + \sigma_N^2 \mathbf{I}\big]^{-1} \mathbf{y}, \\
    \Sigma_* &= K(\mathbf{X}_*, \mathbf{X}_*) - K(\mathbf{X}_*, \mathbf{X}) \big[K(\mathbf{X}, \mathbf{X}) + \sigma_N^2 \mathbf{I}\big]^{-1} K(\mathbf{X}, \mathbf{X}_*).
\end{aligned}
\end{equation}
The posterior mean $\boldsymbol{\mu}_*$ provides an estimate of the function values at the test locations, while the posterior covariance $\Sigma_*$ quantifies the uncertainty in these predictions. 
We emphasize that both $\boldsymbol{\mu}_*$ and $\Sigma_*$ depends on $\big[K(\mathbf{X}, \mathbf{X}) + \sigma_N^2 \mathbf{I}\big]^{-1}$, 
and as such their computation requires $\mathcal{O}(N^3)$ operations in the classical setting.

\subsection{Kernel optimization}

As has been mentioned earlier, the kernel functions in GPR usually involve hyperparameters that are crucial for the performance of the method.
Consider the widely used Matern kernel as an example: 
\[
k(x, x') = \sigma^2 \frac{2^{1-\nu}}{\Gamma(\nu)} \left( \sqrt{2\nu} \frac{\|x - x'\|}{\ell} \right)^\nu K_\nu \left( \sqrt{2\nu} \frac{\|x - x'\|}{\ell} \right),
\]
where $ K_\nu $ is the modified Bessel function of the second kind, $ \Gamma(\nu) $ is the gamma function.
The kernel has three hyperparameters: the variance $ \sigma^2 $,
the length-scale $ \ell >0 $, and  $ \nu > 0 $ that controls the smoothness of the target function.
All of these hyperparameters play an important role in determining the flexibility, expressiveness, and overall performance of the GP model. 
As such an essential step in GPR is to optimize the covariance kernel with respect to these hyperparameters,
and this is often done by a maximum likelihood estimation approach to ensure the kernel accurately models the underlying data.
We assume a general covariance function $k_{\boldsymbol{\theta}}$, parameterized by a vector of hyperparameters $\boldsymbol{\theta}$. 
We shall find the optimal values for $\boldsymbol{\theta}$ by maximizing the marginal likelihood $p(\mathbf{y} \mid \mathbf{X}, \boldsymbol{\theta})$ --
it is called ``marginal'' likelihood because 
it is considered to be obtained by marginalizing the joint distribution~\eqref{eqn:gp:joint} over $\mathbf{f}_*$. 
It can be found that the logarithm of the marginal likelihood (LML), a more convenient form for optimization, is given by:  
\begin{equation}\label{eqn:gp:LML}
\log p(\mathbf{y}|\mathbf{X},\boldsymbol{\theta})=-\frac{1}{2}\mathbf{y}^{\top}K^{-1}\mathbf{y}-\frac{1}{2}\log|K|-\frac{N}{2}\log2\pi,
\end{equation}
where $K = K_f + \sigma^2_n \mathbf{I}$ is the covariance matrix of the noisy observations $\mathbf{y}$, and $K_f=K(\mathbf{X},\mathbf{X})$.

By explicitly evaluating the LML for various hyperparameter settings, one can employ gradient-free optimization techniques—such as grid search, random search, or Bayesian optimization—to identify the optimal kernel parameters $\boldsymbol{\theta}$. 
A key computational challenge is that, as can be seen in Eq.~\eqref{eqn:gp:LML}, evaluating LML also has a complexity of $\mathcal{O}(N^3)$.  
In \cite{zhao2019quantum}, the authors developed a quantum algorithm to address this issue. 
It is shown in their work that, under certain conditions, their algorithm reduces the classical computational complexity from $\mathcal{O}(N^3)$ to logarithmic scaling with respect to $N$, 
making the computation feasible for large datasets.

As is discussed in Section~\ref{sec:introduction},  gradient based methods are often more efficient in many applications.
In such methods, we need to compute not only the function value of LML (which is discussed in \cite{zhao2019quantum}) but also the gradient of it. 
The gradient of LML can be formally derived as, 
\begin{equation}\label{eqn:gp:dLML}
\begin{aligned}
    {\frac{\partial}{\partial\theta_{j}}}\log p(\mathbf{y}|X,\boldsymbol{\theta})& ={\frac{1}{2}}\mathbf{y}^{\top}K^{-1}{\frac{\partial K}{\partial\theta_{j}}}K^{-1}\mathbf{y}-{\frac{1}{2}}\tr\left(K^{-1}{\frac{\partial K}{\partial\theta_{j}}}\right),
\end{aligned}
\end{equation}
for $j=1,...d$ (assuming $\boldsymbol{\theta}$ is $d$-dimensional). 
One can see from Eq~\eqref{eqn:gp:dLML},
the computational complexity for evaluating the LML derivatives is the same (i.e. $\mathcal{O}(N^3)$) as evaluating LML itself (Eq.~\eqref{eqn:gp:LML}),
which is primarily determined by inverting the $N\times N$ matrix $K$. 
In the subsequent section, we provide a quantum gradient descent algorithm that can achieve exponential 
speadup compared to the classical implementations. 

\section{Quantum algorithm}
\label{sec:algorithm}

Without loss of generality, let us consider the optimization of a single hyperparameter $\theta$ in the feasible parameter space, with learning rate $\eta \in \mathbb{R}^+$ satisfying $\eta \ll 1$ to ensure convergence of the gradient descent procedure. The update of multiple hyperparameters, $\boldsymbol{\theta}$, is performed naturally by computing the partial derivatives with respect to each individual hyperparameter. Moreover, a randomized coordinate descent method can be employed to further reduce computational complexity \cite{ding2024random}. We begin by presenting the quantum circuit construction for evaluating the real component of the inner product $\bra{\varphi_1} U \ket{\varphi_2}$ in Section \ref{sec:algorithm:aUb}, where $U$ represents a unitary operator and $\ket{\varphi_1}, \ket{\varphi_2}$ denote arbitrary quantum states in the corresponding Hilbert space. Gradients are then expressed in this form in Section \ref{sec:algorithm:derivative}.
In Section \ref{sec:algorithm:iteration} we provide a complete quantum gradient descent algorithm specifically designed for optimizing the GP model.

\subsection{Estimate the inner product}
\label{sec:algorithm:aUb}
Let $\ket{\varphi_1}, \ket{\varphi_2} \in \mathcal{H} = \mathbb{C}^{2^q}$ be quantum states that can be prepared by unitary operators $U_{\varphi_1}, U_{\varphi_2} \in \mathcal{U}(\mathcal{H})$ respectively, such that $\ket{\varphi_i} = U_{\varphi_i}\ket{0}$ for $i \in {1,2}$. The quantum circuit implementation of the Hadamard test, as illustrated in Figure \ref{fig:circuit:hadamard},
\begin{figure}[htbp]
    \centering
    \begin{quantikz}
        \lstick{$\ket{0}$} & \gate[2]{U_{\psi_0}} &  \\
        \lstick{$\ket{0^q}$} &   &   \\
    \end{quantikz} 
    \quad := \quad 
    \begin{quantikz}
        \lstick{$\ket{0}$} & \gate{H} & \gate{X} & \ctrl{1} & \gate{X} & \ctrl{1} & \ctrl{1} & \gate{H} & \\
        \lstick{$\ket{0^q}$} &  &  & \gate{U_{\varphi_1}} &  & \gate{U_{\varphi_2}} & \gate{U} &  & \\
    \end{quantikz}
    \caption{Quantum circuit for the Hadamard test}.
    \label{fig:circuit:hadamard}
\end{figure}
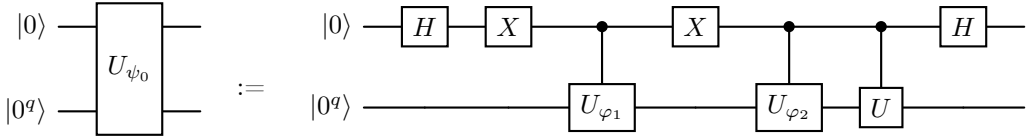
maps $\ket{0}\ket{0^q}$ to $\ket{\psi_0}$:
  \begin{equation}
  \begin{aligned}
      \ket{\psi_0} = U_{\psi_0} \ket{0}\ket{0^q} &= \frac{1}{2} \ket{0} \left( \ket{\varphi_1} + U\ket{\varphi_2} \right) + \frac{1}{2} \ket{1} \left( \ket{\varphi_1} - U\ket{\varphi_2} \right) \\
      &\triangleq \sin{\frac{\alpha}{2}} \ket{\text{good}} + \cos{\frac{\alpha}{2}} \ket{\text{bad}},
  \end{aligned}
  \end{equation}
where $\sin^2 \frac{\alpha}{2} = \left\| \ket{\varphi_1} + U\ket{\varphi_2} \right\|^2/4 = (1+\Real\bra{\varphi_1} U \ket{\varphi_2}) / 2 $. By measuring the probability of the ancilla qubit being in the state $ \ket{0} $, $ \sin^2 \frac{\alpha}{2} $ can be estimated, enabling the calculation of the inner product. However, this approach requires $\mathcal{O}(1/\varepsilon^2)$ measurements, which also introduces a statistical error.

The principal methodology leverages amplitude estimation to enhance the efficiency of the Hadamard test \cite{lin2022lecture}. Let us define an orthonormal basis $\mathcal{B} = {\ket{\text{good}}, \ket{\text{bad}}}$ and the corresponding two-dimensional Hilbert space $\mathcal{H}_{0} = \text{span} \ \mathcal{B}$. Define the Grover operator $G \in \mathcal{U}(\mathbb{C}^{2} \otimes \mathcal{H})$ as the composition of two reflection operators $G = R_{\psi_0} R_{\text{good}}$, where
\begin{equation}
\begin{aligned}
    R_{\psi_0} &:= 2\ket{\psi_0} \bra{\psi_0} - I_{q+1} = U_{\psi_0} (2 \ket{0^{q+1}}\bra{0^{q+1}} - I_{q+1}) U_{\psi_0}^\dagger, \\
    R_{\text{good}} &:= I_{q+1} - 2 \ket{\text{good}} \bra{\text{good}}  \overset{\mathcal{H}_{0}}{=} (I-2\ket{0}\bra{0}) \otimes I_q.
\end{aligned}
\end{equation}
The aforementioned equality holds within the subspace $\mathcal{H}_{0}$. In this representation, the Grover operator $G$ admits a quantum circuit implementation as depicted in Figure \ref{fig:circuit:G}.

\begin{figure}[htbp]
    \centering
    \begin{quantikz}
        \lstick[2]{$\ket{\psi}$} & & \gate{X}\gategroup[2,steps=3,style={inner sep=2pt}]{$R_{\text{good}}$} & \gate{Z} & \gate{X} & \gate[2]{U_{\psi_0^\dagger}}\gategroup[2,steps=5,style={inner sep=2pt}]{$R_{\psi_0}$} & \gate{X} & \gate{Z} & \gate{X} & \gate[2]{U_{\psi_0}} & \\
        & \qwbundle{q} &  &  &  &  & \gate{X^{\otimes q}} & \ctrl{-1} & \gate{X^{\otimes q}} &  & 
    \end{quantikz}
    \caption{Quantum circuit for the Grover operator $G$}.
    \label{fig:circuit:G}
\end{figure}
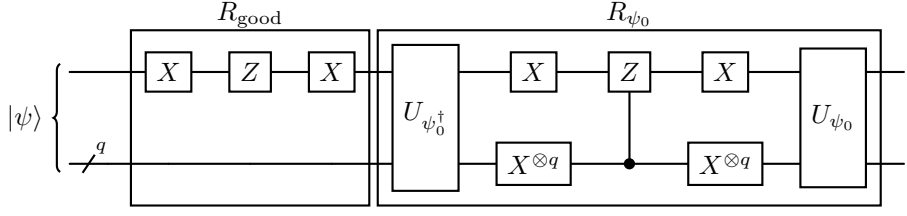

The subspace $\mathcal{H}_{0}$ is an invariant subspace of the operator $G$, represented by the matrix
\begin{equation}
    G|_{\mathcal{B}} = \begin{pmatrix}
        \cos \alpha & \sin \alpha \\ -\sin \alpha & \cos \alpha
    \end{pmatrix}.
\end{equation}
Its two eigenstates are given by
\begin{equation}
    \ket{\psi_\pm} = \frac{\ket{\text{good}} \pm i \ket{\text{bad}}}{\sqrt{2}},
\end{equation}
with eigenvalues $e^{\pm i \alpha}$ respectively. From this point forward, we assume that the $m$-bit fixed point representation ($0.\alpha_{m-1}\cdots\alpha_{0})$ of the phase $\alpha/2\pi$ is exact. 

We now present the application of quantum phase estimation (QPE), originally introduced by Kitaev \cite{kitaev1995quantum}, for amplitude estimation. For a comprehensive theoretical treatment, we refer to \cite{lin2022lecture, rieffel2011quantum}.
We can define a controlled unitary operation as follows:
\begin{equation}
    \mathcal{U} = \sum_{j\in[2^m]} \ket{j}\bra{j} \otimes G^j = \sideset{}{'}\prod_{j = 0}^{m-1} \left( \ket{0}\bra{0} \otimes I + \ket{1} \bra{1} \otimes G^{2^j} \right),
\end{equation}
The primed product $\sideset{}{'}\prod$ denotes a hybrid operation: it represents the tensor product for the first register and the standard matrix product for the second register. Utilizing the Quantum Fourier Transform (QFT) and the operator $\mathcal{U}$, we transform the initial state $\ket{0^m}\ket{\psi_0}$ into
\begin{equation}\label{eqn:QPE}
    \begin{aligned}
        \ket{0^m} \ket{\psi_0} \xrightarrow{U_{\text{QFT}}\otimes I} & \frac{1}{\sqrt{2^m}} \sum_{j\in[2^m]} \ket{j} \ket{\psi_0} \\
        \xrightarrow{\mathcal{U}} & \frac{1}{\sqrt{2^m}} \sum_{j\in[2^m]} \ket{j} G^{j} \ket{\psi_0} = \frac{1}{\sqrt{2^m}} \sum_{j\in[2^m]} \ket{j} G^{j} \left( \frac{e^{i\frac{\alpha}{2}}}{\sqrt{2}} \ket{\psi_+} + \frac{e^{-i\frac{\alpha}{2}}}{\sqrt{2}} \ket{\psi_{-}} \right) \\
        \xrightarrow{U_{\text{QFT}}^\dagger \otimes I} & \frac{1}{\sqrt{2}} e^{i\frac{\alpha}{2}} \ket{\frac{\alpha}{2\pi}} \ket{\psi_{+}} + \frac{1}{\sqrt{2}} e^{-i\frac{\alpha}{2}} \ket{\frac{2\pi-\alpha}{2\pi}} \ket{\psi_{+}},
    \end{aligned}
\end{equation}
which corresponds precisely to the quantum phase estimation process, denoted as $U_{\text{QPE}}$ hereafter. The quantum circuit for $U_{\text{QPE}}$ is illustrated in Figure \ref{fig:circuit:QPE}.

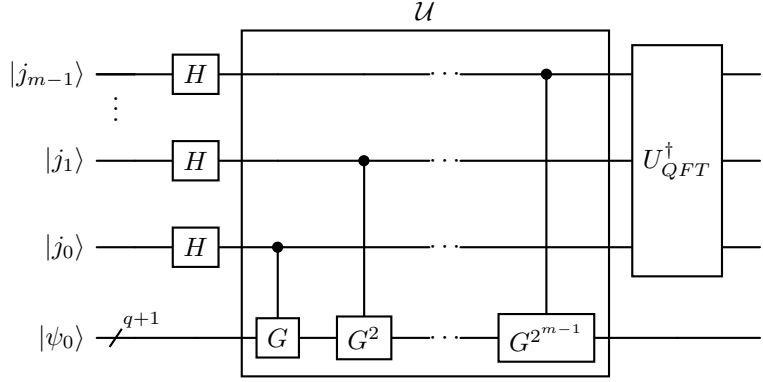
\begin{figure}[htbp]
    \centering
    \begin{quantikz}
        \lstick{$\ket{j_{m-1}}$} & \qw{\vdots} & \gate{H} & \gategroup[4,steps=4,style={inner sep=2pt}]{$\mathcal{U}$} &  & \cdots & \ctrl{3} & \gate[3]{U_{QFT}^\dagger} & \\
        \lstick{$\ket{j_{1}}$} &  & \gate{H} &  & \ctrl{2} & \cdots &  &  & \\
        \lstick{$\ket{j_{0}}$} &  & \gate{H} & \ctrl{1} &  & \cdots &  &  & \\
        \lstick{$\ket{\psi_0}$} & \qwbundle{q+1} &  & \gate{G} & \gate{G^2} & \cdots & \gate{G^{2^{m-1}}} &  & 
    \end{quantikz}
    \caption{Quantum circuit for the quantum phase estimation $U_{\text{QPE}}$}.
    \label{fig:circuit:QPE}
\end{figure}

Utilizing  \textit{classical arithmetics circuits} \cite{rieffel2011quantum, haner2018optimizing}, we can construct a mapping $U_{\cos}$ such that
\begin{equation}
    U_{\cos} \ket{0^m} \ket{\lambda} = \ket{-\cos(2\pi\lambda)} \ket{\lambda}.
\end{equation}
By applying $U_{\cos}$ to $m$ ancilla qubits and the first register in Equation \eqref{eqn:QPE}, we obtain
\begin{equation}
    U_{\cos} \ket{0^m} \ket{\frac{\alpha}{2\pi}} = \ket{-\cos \alpha} \ket{\frac{\alpha}{2\pi}}, \quad 
    U_{\cos} \ket{0^m} \ket{\frac{2\pi - \alpha}{2\pi}} = \ket{-\cos \alpha } \ket{\frac{2\pi - \alpha}{2\pi}},
\end{equation}
with the fact
\begin{equation}
    \Real\bra{\varphi_1}U\ket{\varphi_2} = 2 \sin^2 \frac{\alpha}{2} - 1 = - \cos \alpha.
\end{equation}
Notably, the information of $\Real\bra{\varphi_1}U\ket{\varphi_2}$ is encoded in the ancilla qubits. Finally, after performing the uncomputation by applying $U_{\text{QPE}}^\dagger$ to the state described in Equation \eqref{eqn:QPE}, we arrive at
\begin{equation}
    \ket{\psi_{*}} = \ket{\Real\bra{\varphi_1}U\ket{\varphi_2}} \ket{0^m} \ket{\psi_0}.
\end{equation}
It can be further simplified as
\begin{equation}
\begin{aligned}
    \ket{\psi_{*}} \xrightarrow{I_{m}\otimes I_{m}\otimes U_{\psi_0}^\dagger} & \ket{\Real\bra{\varphi_1}U\ket{\varphi_2}} \ket{0^m} \ket{0^{q+1}} \\
\end{aligned}    
\end{equation}
The entire procedure is illustrated in Figure \ref{fig:circuit:aUb}.

\begin{figure}[htbp]
    \centering
    \begin{quantikz}
        \lstick{$\ket{0^m}$} &  &  & \gate[2]{U_{\cos}} &  &  & \\
        \lstick{$\ket{0^m}$} &  & \gate[2]{U_{\text{QPE}}} &  & \gate[2]{U_{\text{QPE}}^\dagger} &  & \\
        \lstick{$\ket{0^{q+1}}$} & \gate{U_{\psi_0}} &  &  &  & \gate{U_{\psi_0}^\dagger} &
    \end{quantikz}
    \caption{Quantum circuit for computing $\Real\bra{\varphi_1}U\ket{\varphi_2}$}.
    \label{fig:circuit:aUb}
\end{figure}

\subsection{Derivative of the marginal likelihood}
\label{sec:algorithm:derivative}
Now we consider  the computation of the LML derivative. 
In particular we will show how to transform the derivative into the form $\bra{\varphi_1} U \ket{\varphi_2}$ and provide a detailed construction of the operator $U$. First, we outline the necessary assumptions regarding the covariance matrix $K$.

\begin{myassum}[Bounded Covariance Matrix]
    The covariance matrix is uniformly bounded and the derivative of it is uniformly upper bounded, i.e.,
    \begin{equation}
        0 < C_0 \leq \left\| K \right\| \leq C_1, \quad \left\| \frac{\dif K}{\dif \theta} \right\| \leq C_1,
    \end{equation}
    where $C_0$, $C_1$ are independent of $\theta$.
\end{myassum}
For simplicity we also assume that $C_1 = 1$. Let $\kappa$ denote the condition number of $K$. The singular values of $K$ are thus contained in the interval $[1/\kappa,1]$, with $\sigma_N^2 \leq 1/\kappa$ since $K = K_f + \sigma_N^2 I$. Moreover, we can define $C_0 = \sigma_N^2 / \beta$ with $\beta>1$. For instance, we can take $\beta = 4/3$.

\begin{myassum}[Oracles]\label{assump:oracles}
    We are provided with the oracles $O_{\mathbf{y}}$, $O_{K}$ and $O_{\dif K}$ that satisfy the following conditions:
    \begin{equation*}
    \begin{aligned}
      &  O_{\mathbf{y}} \ket{0^n} = \ket{\mathbf{y}}, \\
      &  O_{K} \ket{\mathbf{x}} \ket{0^a} \ket{\theta} = \left(K(\theta)\ket{\mathbf{x}} \ket{0^a} + \ket{\perp}\right) \ket{\theta}, \\
      &  O_{\dif K} \ket{\mathbf{x}} \ket{0^b} \ket{\theta} = \left( \frac{\dif K}{\dif \theta}(\theta)\ket{\mathbf{x}} \ket{0^b} + \ket{\perp}\right) \ket{\theta},
    \end{aligned}
    \end{equation*}
    where 
   \begin{itemize}
   \item $\ket{\theta} = \ket{\theta_{0} \cdots \theta_{m-1}}$ is the binary representation of $\theta$, 
    \item $\ket{\mathbf{y}} = \sum_{i=0}^{N-1} y_{i} \ket{i}$ and  $N=2^n$,
    \item $\ket{\perp}$ denotes some orthogonal state (not necessarily normalized) such that $(I \otimes \ket{0^c}\bra{0^c}) \ket{\perp} = 0$ with $\ket{0^c}, c=a,b$ the ancilla register.
    \end{itemize}
\end{myassum}

Given the block encoding $O_{K}$ of the covariance matrix $K$, several quantum linear system solvers can be employed to obtain a block encoding of $K^{-1}$, such as QSVT and HHL. For further details of these methods, we refer readers to \cite{harrow2009quantum, gilyen2019quantum,chakraborty2018power}. In this context, we define the procedure as an oracle $O_{K^{-1}}$:
\begin{equation}
    O_{K^{-1}} \ket{\mathbf{x}} \ket{0^{a+1}} \ket{\theta} = \left( C_0 K(\theta)^{-1}\ket{\mathbf{x}} \ket{0^{a+1}} + \ket{\perp}\right) \ket{\theta}.
\end{equation}
To calculate the derivative of LML, we first reformulate it into the form of $\bra{\varphi_1} U \ket{\varphi_2}$,
\begin{equation}\label{eqn:derivative}
\begin{aligned}
    {\frac{\partial}{\partial\theta_{j}}}\log p(\mathbf{y}|X,\mathbf{\theta})&= {\frac{1}{2}}\mathbf{y}^{\top}K^{-1}{\frac{\partial K}{\partial\theta_{j}}}K^{-1}\mathbf{y}-{\frac{1}{2}}\tr\left(K^{-1}{\frac{\partial K}{\partial\theta_{j}}}\right) \\
    &= {\frac{1}{2}}\mathbf{y}^{\top}K^{-1}{\frac{\partial K}{\partial\theta_{j}}}K^{-1}\mathbf{y}-{\frac{1}{2}} \sum_{j\in[2^n]} \mathbf{e}_{j}^{\top} K^{-1}{\frac{\partial K}{\partial\theta_{j}}} \mathbf{e}_{j}
\end{aligned}
\end{equation}
We combine all the column vectors into large column vectors
\begin{equation*}
\begin{aligned}
    \tilde{\mathbf{y}}_{l} &= \left[\mathbf{e}_0;\cdots;\mathbf{e}_{2^n-1};\mathbf{y};\mathbf{0};\cdots;\mathbf{0}\right] = \ket{0} \otimes \left(\sum_{j\in[2^n]} \ket{j} \otimes \mathbf{e}_{j} \right) + \ket{1} \otimes \ket{0^n} \otimes \mathbf{y}, \\
    \tilde{\mathbf{y}}_{r} &= \left[-C_0\mathbf{e}_0;\cdots;-C_0\mathbf{e}_{2^n-1};\mathbf{y};\mathbf{0};\cdots;\mathbf{0}\right] = \ket{0} \otimes \left(\sum_{j\in[2^n]} \ket{j} \otimes -C_0 \mathbf{e}_{j} \right) + \ket{1} \otimes \ket{0^n} \otimes \mathbf{y}.
\end{aligned}
\end{equation*}
The quantum states are defined as
\begin{equation*}
\begin{aligned}
    \ket{\tilde{\mathbf{y}}_{l}} &= \frac{1}{\mathcal{N}_{l}} \tilde{\mathbf{y}}_{l} =  \sum_{j\in[2^n]} \frac{1}{\mathcal{N}_{l}} \ket{0}_{i_1} \ket{j}_{i_2} \ket{j}_{w} + \frac{\left\| \mathbf{y} \right\|}{\mathcal{N}_{l}} \ket{1}_{i_1}\ket{0^n}_{i_2}\ket{\mathbf{y}}_{w}, \\
    \ket{\tilde{\mathbf{y}}_{r}} &= \frac{1}{\mathcal{N}_{r}} \tilde{\mathbf{y}}_{r} =  \sum_{j\in[2^n]} \frac{-C_0}{\mathcal{N}_{r}} \ket{0}_{i_1} \ket{j}_{i_2} \ket{j}_{w} + \frac{\left\| \mathbf{y} \right\|}{\mathcal{N}_{r}} \ket{1}_{i_1}\ket{0^n}_{i_2}\ket{\mathbf{y}}_{w},
\end{aligned}
\end{equation*}
where the subscripts $i_1$, $i_2$ indicate `index register', $w$ indicates `work register', and $\mathcal{N}_{l}$, $\mathcal{N}_{r}$ are the normalization factors
\begin{equation*}
    \mathcal{N}_{l} = \left\| \tilde{\mathbf{y}}_{l} \right\| = \sqrt{ \sum_{j=0}^{2^n-1} \left\| \mathbf{e}_{j} \right\|^2 + \left\| \mathbf{y} \right\|^2} = \sqrt{2^n + \left\| \mathbf{y} \right\|^2}, \quad \mathcal{N}_{r} = \sqrt{2^nC_0^2 + \left\| \mathbf{y} \right\|^2}.
\end{equation*}
Given the oracle $O_{\mathbf{y}}$, we can construct the unitaries $U_{\tilde{\mathbf{y}}_{l}}$ and $U_{\tilde{\mathbf{y}}_{r}}$ to prepare $\ket{\tilde{\mathbf{y}}_{l}}$ and $\ket{\tilde{\mathbf{y}}_{r}}$ respectively. For detailed analysis, see Appendix \ref{sec:appendix:initial}. The initial states are prepared as
\begin{equation*}
    \ket{\varphi_l} = \ket{\tilde{\mathbf{y}}_{l}} \ket{0^{a+1}}_{a_1} \ket{0^{a+1}}_{a_2} \ket{0^{b}}_{b} \ket{\theta}, \quad \ket{\varphi_r} = \ket{\tilde{\mathbf{y}}_{r}} \ket{0^{a+1}}_{a_1} \ket{0^{a+1}}_{a_2} \ket{0^{b}}_{b} \ket{\theta},
\end{equation*}
where $a_1$, $a_2$ and $b$ indicate `ancilla register' used for the oracles.

Next, we construct a unitary $U$ so that the derivative of the covariance function \eqref{eqn:derivative} can be computed by $\bra{\varphi_l} U \ket{\varphi_r}$. Let 
\begin{equation}
    U = \left[ O_{K^{-1}} \right]_{w,a_2,\theta}  \left[ O_{dK} \right]_{w,b,\theta} \left[ O_{K^{-1}} \right]_{w, a_1,\theta}^{i_1}, \label{eq:U}
\end{equation}
where the subscript denotes the target register and the superscript denotes the control register. Applying $U$ to the initial state $\ket{\varphi_{r}}$, we obtain
\begin{equation*}
\begin{aligned}
    &\ket{\tilde{\mathbf{y}}_{r}} \ket{0^{a+1}}_{a_1} \ket{0^{a+1}}_{a_2} \ket{0^{b}}_{b} \ket{\theta} \\
    \xrightarrow{\left[O_{K^{-1}}\right]_{w, a_1, \theta}^{i_1}} &  \sum_{j\in[2^n]} \frac{-C_0}{\mathcal{N}_{r}} \ket{0}_{i_1} \ket{j}_{i_2} \ket{j}_{w} \ket{0^{a+1}}_{a_1} \ket{0^{a+1}}_{a_2} \ket{0^{b}}_{b} \ket{\theta} \\
    &+ \frac{\left\| \mathbf{y} \right\|}{\mathcal{N}_{r}} \ket{1}_{i_1}\ket{0^n}_{i_2} C_0 K(\theta)^{-1}\ket{\mathbf{y}}_{w} \ket{0^{a+1}}_{a_1} \ket{0^{a+1}}_{a_2} \ket{0^{b}}_{b} \ket{\theta} + \ket{\perp_{a_1}}\\
    \xrightarrow{\left[ O_{dK} \right]_{w,b,\theta}} & \sum_{j\in[2^n]} \frac{-C_0}{\mathcal{N}_{r}} \ket{0}_{i_1} \ket{j}_{i_2} \frac{\dif K}{\dif \theta} (\theta) \ket{j}_{w} \ket{0^{a+1}}_{a_1} \ket{0^{a+1}}_{a_2} \ket{0^{b}}_{b} \ket{\theta} \\
    &+ \frac{C_0\left\| \mathbf{y} \right\|}{\mathcal{N}_{r}} \ket{1}_{i_1}\ket{0^n}_{i_2} \frac{\dif K}{\dif \theta}(\theta) K(\theta)^{-1}\ket{\mathbf{y}}_{w} \ket{0^{a+1}}_{a_1} \ket{0^{a+1}}_{a_2} \ket{0^{b}}_{b} \ket{\theta} + \ket{\perp_{a_1}} + \ket{\perp_{b}} \\
    \xrightarrow{\left[ O_{K^{-1}} \right]_{w,a_2,\theta}} & \sum_{j\in[2^n]} \frac{-C_0^2}{\mathcal{N}_{r}} \ket{0}_{i_1} \ket{j}_{i_2} K(\theta)^{-1} \frac{\dif K}{\dif \theta} (\theta) \ket{j}_{w} \ket{0^{a+1}}_{a_1} \ket{0^{a+1}}_{a_2} \ket{0^{b}}_{b} \ket{\theta} \\
    &+ \frac{C_0^2\left\| \mathbf{y} \right\|}{\mathcal{N}_{r}} \ket{1}_{i_1}\ket{0^n}_{i_2} K(\theta)^{-1} \frac{\dif K}{\dif \theta}(\theta) K(\theta)^{-1}\ket{\mathbf{y}}_{w} \ket{0^{a+1}}_{a_1} \ket{0^{a+1}}_{a_2} \ket{0^{b}}_{b} \ket{\theta} \\
    &+ \ket{\perp_{a_1}} + \ket{\perp_{b}} + \ket{\perp_{a_2}},
\end{aligned}
\end{equation*}
where $\ket{\perp_{m}}$ is orthogonal to the ancilla register indexed $m$. The quantum circuit for $U$ is illustrated in Figure \ref{fig:circuit:U}. 

\begin{figure}[htbp]
    \centering
    \begin{quantikz}
        \lstick{$i_1$} &  &  & \ctrl{2} &  &  &  \\
        \lstick{$i_2$} & \qwbundle{n} &  &  &  &  & \\
        \lstick{$w$} & \qwbundle{n} &  &  \gate[5][1.7cm]{O_{K^{-1}}}\gateinput{$\bullet$} &  \gate[5][1.7cm]{O_{dK}}\gateinput{$\bullet$}  & \gate[5][1.7cm]{O_{K^{-1}}}\gateinput{$\bullet$} &  \\
        \lstick{$a_1$} & \qwbundle{a+1} &  & \gateinput{$\bullet$} &  &  &  \\
        \lstick{$a_2$} & \qwbundle{a+1} &  &  &  & \gateinput{$\bullet$} &  \\
        \lstick{$b$} & \qwbundle{b} &  &  & \gateinput{$\bullet$} &  &  \\
        \lstick{$\theta$} & \qwbundle{m} &  & \gateinput{$\bullet$} & \gateinput{$\bullet$} & \gateinput{$\bullet$} &  \\
    \end{quantikz}
    \caption{Quantum circuit for the unitary operator $U$. The bullet points indicate the target qubits to which the gate is applied.}.
    \label{fig:circuit:U}
\end{figure}
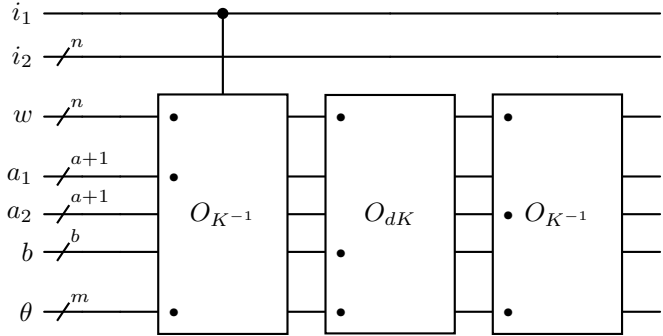

Finally it can be verified that
\begin{equation*}
\begin{aligned}
    &\quad \bra{\varphi_l} U \ket{\varphi_r} \\
    &= \bra{\tilde{\mathbf{y}}_{l}} \left( \sum_{j\in[2^n]} \frac{-C_0^2}{\mathcal{N}_{r}} \ket{0}_{i_1} \ket{j}_{i_2} K(\theta)^{-1} \frac{\dif K}{\dif \theta} (\theta) \ket{j}_{w} + \frac{C_0^2\left\| \mathbf{y} \right\|}{\mathcal{N}_{r}} \ket{1}_{i_1}\ket{0^n}_{i_2} K(\theta)^{-1} \frac{\dif K}{\dif \theta}(\theta) K(\theta)^{-1}\ket{\mathbf{y}}_{w} \right) \\
    &= \frac{C_0^2}{\mathcal{N}_{l} \mathcal{N}_{r}} \left( -\sum_{j\in[2^n]} \bra{j}_{w} K(\theta)^{-1} \frac{\dif K}{\dif \theta} (\theta) \ket{j}_{w} + \left\| \mathbf{y} \right\|^2 \bra{\mathbf{y}}_{w} K(\theta)^{-1} \frac{\dif K}{\dif \theta}(\theta) K(\theta)^{-1}\ket{\mathbf{y}}_{w} \right) \\
    &= \frac{2 C_0^2}{\mathcal{N}_{l} \mathcal{N}_{r}} \frac{\partial}{\partial\theta}\log p(\mathbf{y}|X,\mathbf{\theta}),
\end{aligned}
\end{equation*}
which indicates that the LML derivative can be computed via $\bra{\varphi_l} U \ket{\varphi_r}$
with $U$ given by Eq.~\eqref{eq:U}.

\subsection{Iteration} \label{sec:algorithm:iteration}
In Section \ref{sec:algorithm:derivative} we discussed the quantum computation of the gradient of LML.
Now we consider the complete gradient descent algorithm for optimizing the LML function. 
In a standard gradient descent iteration, the design parameter $\theta$ is updated as, 
\begin{equation}\label{eqn:GD}
\begin{aligned}
    \theta_{t+1} &= \theta_{t} - \eta_t \frac{\partial}{\partial\theta}\log p(\mathbf{y}|X,\mathbf{\theta}) \\
    &= \theta_{t} - \frac{\eta_t \mathcal{N}_{l}\mathcal{N}_{r}}{2C_0^2} \bra{\varphi_l} U \ket{\varphi_r} \\
    &\triangleq \theta_{t} - \frac{\mu_t}{2C_0^2} \bra{\varphi_l} U \ket{\varphi_r},
\end{aligned}
\end{equation}
where $\eta_t$ is the step size and $\mu_t = \eta_t \mathcal{N}_{l} \mathcal{N}_{r}$.
Since $\theta = \mathcal{O}(1)$, $\eta_t$ should be chosen small enough to satisfy $\mu_t = \mathcal{O}(1)$.

\begin{algorithm}
  \caption{Hybrid quantum-classical gradient descent algorithm for optimizing GP}\label{alg:hybrid}
  \begin{algorithmic}[1]
  \REQUIRE Oracles $O_{\mathbf{y}}$, $O_{K}$, $O_{dK}$. The initial parameter $\theta_{0}$. 
  \STATE Construct $U_{\tilde{\mathbf{y}}_{l}}$, $U_{\tilde{\mathbf{y}}_{r}}$ as shown in Figure \ref{fig:circuit:initial};
  \STATE\label{alg:step:K-1} Construct $O_{K^{-1}}$ using QSVT and then $U$ as shown in Figure \ref{fig:circuit:U};
  \STATE Construct $U_{\psi_{0}}$ and then $G$ as shown in Figure \ref{fig:circuit:hadamard}, Figure \ref{fig:circuit:G};
  \FOR{$t = 0, 1, \dots, T-1$}
  \STATE\label{alg:step:qpe} Estimate $\bra{\varphi_l} U \ket{\varphi_r}$ by measuring the ancilla qubit in the Hadmard test;
  \STATE Update $\theta$ classically as $\theta_{t+1} \leftarrow \theta_{t} - \frac{\mu_t}{2C_0^2} \bra{\varphi_l} U \ket{\varphi_r}$;
  \STATE Estimate the LML as shown in \cite{zhao2019quantum};
  \IF{$\delta$LML < tol \OR $\delta\theta$ < tol}
    \STATE Stop;
  \ENDIF
  \ENDFOR
  \end{algorithmic}
\end{algorithm}

A natural approach to implement the optimization in this context is the hybrid quantum-classical framework, where the gradient is estimated using the Hadamard test with sufficient measurements, and the gradient descent updates are performed classically, as shown in Algorithm \ref{alg:hybrid}. This approach can be interpreted as stochastic gradient descent (SGD) with an unbiased gradient estimator, and its general convergence properties are well-established \cite{harrow2021low}. Recent advancements have introduced a ``doubly stochastic'' gradient descent framework, which provides theoretical guarantees under the \textit{Polyak-Lojasiewicz} (PL) condition \cite{bottou2018optimization,sweke2020stochastic}. However, the Hadamard test requires $\mathcal{O}(1/\varepsilon^2)$ measurements to achieve a precision of $\varepsilon$, with its complexity further influenced by the gradient's variance. Moreover, measurements must be performed at each iteration (i.e. step (5) in Alg.~\ref{alg:hybrid}), adding to the computational burden.

\begin{algorithm}
  \caption{Quantum gradient descent algorithm for optimizing GP}\label{alg:qgd}
  \begin{algorithmic}[1]
  \REQUIRE Oracles $O_{\mathbf{y}}$, $O_{K}$, $O_{dK}$. The initial parameter $\theta_{0}$. 
  \STATE Construct $U_{\tilde{\mathbf{y}}_{l}}$, $U_{\tilde{\mathbf{y}}_{r}}$ as shown in Figure \ref{fig:circuit:initial};
  \STATE\label{alg:step:K-1} Construct $O_{K^{-1}}$ using QSVT and then $U$ as shown in Figure \ref{fig:circuit:U};
  \STATE Construct $U_{\psi_{0}}$ and then $G$ as shown in Figure \ref{fig:circuit:hadamard}, Figure \ref{fig:circuit:G};
  \FOR{$t = 0, 1, \dots, T-1$}
  \STATE\label{alg:step:qpe} Compute $\bra{\varphi_l} U \ket{\varphi_r}$ using QPE as shown in Figure \ref{fig:circuit:QPE}, Figure \ref{fig:circuit:aUb};
  \STATE Update $\ket{\theta_{t}}$ using classical arithmetic circuits as shown in Equation \eqref{eqn:quantum:GD};
  \STATE Reset the ancilla qubits;
  \STATE Estimate the LML as shown in \cite{zhao2019quantum};
  \IF{$\delta$LML < tol \OR $\delta\theta$ < tol}
    \STATE Stop;
  \ENDIF
  \ENDFOR
  \end{algorithmic}
\end{algorithm}

To address these limitations, we propose a quantum gradient descent algorithm that incorporates amplitude estimation to accelerate the Hadamard test and updates the hyperparameter as a quantum state, thereby avoiding intermediate measurements and reducing overall computational overhead.
We present the details of our algorithm as the following. 
 At the $t$-th step with $\ket{\theta_{t}}$, noting that $\bra{\varphi_1}U\ket{\varphi_2} = \Real\bra{\varphi_1}U\ket{\varphi_2}$ as it is a real number, we apply the amplitude estimation method described in Section \ref{sec:algorithm:aUb} to obtain
\begin{equation}
    \ket{\bra{\varphi_1}U\ket{\varphi_2}}  \ket{0^m} \ket{0^{2n+2a+b+4}} \ket{\theta_{t}}.
\end{equation}
Using classical arithmetic circuits again, for example the modular multiplication algorithm in \cite[Section 6.4.5]{rieffel2011quantum}, we can obtain
\begin{equation}\label{eqn:quantum:GD}
\begin{aligned}
    &\ket{\bra{\varphi_1}U\ket{\varphi_2}}  \ket{0^m} \ket{0^{2n+2a+b+4}} \ket{\theta_{t} - \frac{\eta_t \mathcal{N}_{l}\mathcal{N}_{r}}{2C_0^2} \bra{\varphi_l} U \ket{\varphi_r}} \\
    =& \ket{\bra{\varphi_1}U\ket{\varphi_2}}  \ket{0^m} \ket{0^{2n+2a+b+4}} \ket{\theta_{t+1}}.
\end{aligned}
\end{equation}
To reset the ancilla qubits, we can measure their states and apply the $X$ gate if the measurement result is $\ket{1}$. This procedure allows us to obtain the state $\ket{\theta_{t+1}}$ at the end.
The complete algorithm is provided in Algorithm \ref{alg:qgd} and a few remarks are listed in order. 
\begin{itemize}
\item First we emphrasize that $\ket{\theta_t}$ is represented in its $m$-qubit binary form, eliminating the need to measure $\ket{\theta_t}$ at each iteration of the procedure. The entire gradient descent process is fully embedded within the quantum circuit. 
\item Furthermore, during the iterative process, the value of the LML can be estimated \cite{zhao2019quantum}, enabling a line search procedure to determine the stepsize or 
formulating a stopping criterion as in Step (9) of Algorithm \ref{alg:qgd}. 
\item Finally we note that the number of measurements required in the gradient descent method remains independent of the variance of the LML or its gradient, which contrasts with the requirements for estimating the LML itself \cite{zhao2019quantum0, zhao2019quantum}. This property facilitates the efficient implementation of the optimization process without incurring additional overhead from repeated measurements.
\end{itemize}

\section{Runtime and error analysis}
\label{sec:analysis}
In this section, we provide runtime and error analysis of the proposed quantum gradient descent method.
To start we first analyze the error and runtime for a single iteration  of the algorithm, which is detailed in the following theorem:

\begin{mythm}[single step]
\label{thm:complexity:single}
Given the quantum state $\ket{\theta_{t}}$ encoding the binary representation of current parameter $\theta_{t}$ at time step $t$ for GP model selection, there exists a quantum algorithm using Oracles in Assumption \ref{assump:oracles} to prepare $\ket{\theta_{t+1}}$ with step size $\eta_t$ to accuracy $\varepsilon$, with a success probability at least $1-\delta$. Specifically, the algorithm uses $\mathcal{Q}_{0}$ ancilla qubits, $\mathcal{Q}_{1}$ queries of $U_{\varphi_l}, U_{\varphi_l}^\dagger, U_{\varphi_r}, U_{\varphi_r}^\dagger, O_{dK}, O_{dK}^\dagger$, $\mathcal{Q}_2$ queries of $O_{K}, O_{K}^\dagger$, controlled $O_{K}$, controlled $O_{K}^\dagger$, and $\mathcal{O}\left(\text{poly}(\mathcal{Q}_{0})\mathcal{Q}_1 + \text{poly}(n)\mathcal{Q}_{2}\right)$ additional $CNOT$ gates and single qubit gates, with
\begin{equation*}
    \mathcal{Q}_{0} = n+2a+b+4+ 2\left\lceil \log \frac{\mu_t}{\delta\sigma_N^4 \varepsilon} \right\rceil,
\end{equation*}
\begin{equation*}
    \mathcal{Q}_{1} = \mathcal{O}\left(\frac{\mu_t}{\delta\sigma_N^4\varepsilon}\right), \quad \mathcal{Q}_{2} = \mathcal{O} \left( \frac{\mu_t}{\delta \sigma_N^6 \varepsilon} \log \frac{\mu_t}{\sigma_N^8 \varepsilon} \right), \quad \mu_t = \eta_t \mathcal{N}_{l} \mathcal{N}_{r}.
\end{equation*}
\end{mythm}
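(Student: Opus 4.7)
The plan is to execute the procedure outlined in Algorithm~\ref{alg:qgd}: first construct the block-encoded inverse $O_{K^{-1}}$ from $O_K$ via QSVT, then assemble the unitary $U$ of Eq.~\eqref{eq:U} together with the state-preparation unitaries $U_{\varphi_l}, U_{\varphi_r}$, run the amplitude-estimation circuit of Figure~\ref{fig:circuit:aUb} to write $\Real\bra{\varphi_l}U\ket{\varphi_r}$ into an $m$-qubit register in binary fixed-point form, apply the reversible classical arithmetic of Eq.~\eqref{eqn:quantum:GD} to update $\ket{\theta_t}\mapsto\ket{\theta_{t+1}}$, and finally measure-and-reset the auxiliary registers. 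The task of the proof is then to allocate precision budgets across these blocks so that the total error in $\theta_{t+1}$ is at most $\varepsilon$ with probability at least $1-\delta$, and to read off the resource counts from the standard complexities of QSVT and QPE.

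The error budget is set as follows. From $\theta_{t+1}=\theta_t-(\mu_t/2C_0^2)\bra{\varphi_l}U\ket{\varphi_r}$ and $C_0=\Theta(\sigma_N^2)$ (from the bounded-covariance assumption together with the convention $C_0=\sigma_N^2/\beta$), an error $\varepsilon_{\mathrm{ip}}$ in the inner product translates into an error $\Theta(\mu_t\varepsilon_{\mathrm{ip}}/\sigma_N^4)$ in $\theta_{t+1}$, so I require $\varepsilon_{\mathrm{ip}}=\Theta(\sigma_N^4\varepsilon/\mu_t)$. I would split this budget between QPE inaccuracy and the block-encoding error of $O_{K^{-1}}$: if QSVT produces a block encoding of $C_0 K^{-1}$ with operator-norm error $\varepsilon_{\mathrm{inv}}$, then sub-multiplicativity, $\|dK/d\theta\|\leq 1$, and the two nested occurrences of $O_{K^{-1}}$ inside $U$ give an induced change in $\bra{\varphi_l}U\ket{\varphi_r}$ that is polynomial in $\varepsilon_{\mathrm{inv}}/\sigma_N^2$; taking $\varepsilon_{\mathrm{inv}}$ of order $\sigma_N^8\varepsilon/\mu_t$ (which matches the $\log(\mu_t/(\sigma_N^8\varepsilon))$ factor appearing inside $\mathcal{Q}_2$) absorbs this into half of the inner-product budget.

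On the invariant subspace $\mathcal{H}_0$ the Grover operator $G$ has eigenphases $\pm\alpha$ with $-\cos\alpha=\Real\bra{\varphi_l}U\ket{\varphi_r}$, so QPE followed by the arithmetic block $U_{\cos}$ writes the desired inner product into an $m$-qubit register with precision $\Theta(2^{-m})$. By the standard QPE tail bound, choosing $m=\lceil\log(\mu_t/(\delta\sigma_N^4\varepsilon))\rceil$ bounds the QPE failure probability by $\delta$ while simultaneously achieving the remaining half of the precision budget. This supplies the $2\lceil\log(\mu_t/(\delta\sigma_N^4\varepsilon))\rceil$ ancilla contribution to $\mathcal{Q}_0$ (one $m$-register for the phase and one for the $U_{\cos}$ output). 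The total number of applications of $G$ inside QPE is $\sum_{j=0}^{m-1}2^j=\mathcal{O}(2^m)=\mathcal{O}(\mu_t/(\delta\sigma_N^4\varepsilon))$, and since each $G$ uses $U_{\varphi_l}$, $U_{\varphi_r}$, $U$ and their adjoints a constant number of times—with each $U$ calling $O_{dK}$ once and $O_{K^{-1}}$ twice—this matches the stated $\mathcal{Q}_1$.

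Queries to $O_K$ are then obtained from the QSVT block-encoded-inversion complexity of~\cite{gilyen2019quantum,chakraborty2018power}: preparing $O_{K^{-1}}$ to operator-norm accuracy $\varepsilon_{\mathrm{inv}}$ costs $\mathcal{O}(\kappa\log(\kappa/\varepsilon_{\mathrm{inv}}))$ controlled queries to $O_K$ and $O_K^\dagger$. The bounded-covariance assumption together with $K=K_f+\sigma_N^2 I$ yields $\kappa=\mathcal{O}(1/\sigma_N^2)$, so each $O_{K^{-1}}$ costs $\mathcal{O}(\sigma_N^{-2}\log(\mu_t/(\sigma_N^8\varepsilon)))$ queries to $O_K$; multiplying by the total number of $O_{K^{-1}}$'s invoked across the amplitude-estimation circuit yields the claimed $\mathcal{Q}_2$. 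The residual $\mathrm{poly}(\mathcal{Q}_0)$ and $\mathrm{poly}(n)$ CNOT and single-qubit overheads come from $U_{\cos}$, from the modular-multiplication update of Eq.~\eqref{eqn:quantum:GD}, and from the measure-and-$X$ reset. The main technical obstacle I expect is the error-propagation bookkeeping in the middle step, namely showing that the $\varepsilon_{\mathrm{inv}}$-perturbation of $C_0 K^{-1}$ propagates through the two nested uses inside $U$ and through the $\mathcal{O}(2^m)$ applications of $G$ inside QPE without being amplified beyond $\varepsilon_{\mathrm{ip}}$; a secondary delicate point is verifying that the fixed-point assumption of Section~\ref{sec:algorithm:aUb} makes the final measure-and-$X$ reset leave no residual entanglement between the $\theta_{t+1}$ register and the ancillas, so that the overall failure probability can still be absorbed into the stated $\delta$.
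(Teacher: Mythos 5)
Your proposal follows essentially the same route as the paper's proof: split the error between the QSVT/block-encoded inversion of $K$ and the phase estimation of the Grover operator, allocate $\varepsilon_1,\varepsilon_2=\Theta(C_0^2\varepsilon/\mu_t)$ to each, and read off $\mathcal{Q}_0,\mathcal{Q}_1,\mathcal{Q}_2$ from the $\mathcal{O}(2^m)$ Grover applications and the $\mathcal{O}(\sigma_N^{-2}\log(\cdot))$ cost per inversion. One clarification on the step you flag as the main obstacle: there is no $1/\sigma_N^2$ amplification of the block-encoding error through the two nested uses of $O_{K^{-1}}$ inside $U$, because every encoded block ($C_0K^{-1}$ and $\dif K/\dif\theta$) has operator norm at most $1$, so the triangle inequality gives the clean additive bound $\bigl|\bra{\varphi_l}(\tilde{U}-U)\ket{\varphi_r}\bigr|\leq 2\varepsilon_1$; this is why the paper can take $\varepsilon_1=\Theta(\sigma_N^4\varepsilon/\mu_t)$ rather than your more conservative $\sigma_N^8\varepsilon/\mu_t$ (the difference is absorbed into the logarithm in $\mathcal{Q}_2$, so your final counts still match). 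Likewise the $\mathcal{O}(2^m)$ repetitions of $G$ do not compound the inversion error: QPE estimates the eigenphase $\alpha$ of the Grover operator built from $\tilde{U}$, a fixed quantity satisfying $-\cos\alpha=\Real\bra{\varphi_l}\tilde{U}\ket{\varphi_r}$, so the perturbation enters only once when comparing $\tilde{U}$ to $U$.
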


\begin{proof}
There are two dominant sources of error: the inverse oracle $O_{K^{-1}}$ in Step \ref{alg:step:K-1} and the quantum phase estimation $U_{QPE}$ in Step \ref{alg:step:qpe}. To clarify the notations, we use $\tilde{O}_{K^{-1}}$ and 
\begin{equation}
    \tilde{U} = \left[ \tilde{O}_{K^{-1}} \right]_{w,a_2,\theta}  \left[ O_{dK} \right]_{w,b,\theta} \left[ \tilde{O}_{K^{-1}} \right]_{w, a_1,\theta}^{i_1},
\end{equation}
to denote the approximations of $O_{K^{-1}}$ and $U$ respectively. Given an $(s,a,0)$-block-encoding $O_{K}$ of $K$, a block-encoding $\tilde{O}_{K^{-1}}$ can be constructed via faster Hamiltonian simulation \cite{chakraborty2018power}, such that
\begin{equation}
    \left\| \left(I \otimes \bra{0^{a+1}} \right) \tilde{O}_{K^{-1}} \left( I \otimes \ket{0^{a+1}} \right) - C_0 K^{-1}(\theta) \right\| \leq \varepsilon_1.
\end{equation}
The total number of queries to $O_{K}$, $O_{K}^\dagger$, $CNOT$ gates and single-qubit gates is 
\begin{equation}\label{eqn:complexity:inverse}
    \mathcal{O} \left( \frac{sa}{C_0} \log^2 \left( \frac{1}{C_0^2 \varepsilon_1} \right) \right) = \mathcal{O} \left( \frac{1}{\sigma_N^2} \log^2 \left( \frac{1}{\sigma_N^4\varepsilon_1} \right) \right),
\end{equation}
where $s=1$ with Assumption \ref{assump:oracles}. Therefore 
\begin{equation*}
\begin{aligned}
    &\left\|  \left(I \otimes \bra{0^{a+1}}_{a_1} \bra{0^{a+1}}_{a_2} \bra{0^b}_{b}\right) \left(\tilde{U} - U \right) \left(I \otimes \ket{0^{a+1}}_{a_1} \ket{0^{a+1}}_{a_2} \ket{0^b}_{b}\right) \right\| \\
    \leq & \left\|  \left(I \otimes \bra{0^{a+1}}_{a_1}\right) \left( \left[ \tilde{O}_{K^{-1}} \right]_{w, a_1,\theta}^{i_1} - \left[ O_{K^{-1}} \right]_{w, a_1,\theta}^{i_1} \right) \left(I \otimes \ket{0^{a+1}}_{a_1}\right) \right\| \\
    &+ \left\|  \left(I \otimes \bra{0^{a+1}}_{a_2}\right) \left( \left[ \tilde{O}_{K^{-1}} \right]_{w, a_2,\theta} - \left[ O_{K^{-1}} \right]_{w, a_2,\theta} \right) \left(I \otimes \ket{0^{a+1}}_{a_2}\right) \right\| \\
    \leq & 2 \left\| \left(I \otimes \bra{0^{a+1}} \right) \tilde{O}_{K^{-1}} \left( I \otimes \ket{0^{a+1}} \right) - C_0 K^{-1}(\theta) \right\| \\
    \leq & 2 \varepsilon_1,
\end{aligned}
\end{equation*}
which implies
\begin{equation}
    \left| \bra{\varphi_l} \left( \tilde{U} - U \right) \ket{\varphi_r} \right| \leq 2 \varepsilon_1.
\end{equation}

Now let us consider the error introduced in the phase estimation approach in Section \ref{sec:algorithm:aUb}. In general, $\alpha/2\pi$ cannot be exactly represented by a $m$-bit number. In order to obtain the approximated phase $\tilde{\alpha}/2\pi$ to accuracy $\varepsilon_2 = 2^{-m}$ with a success probability at least $1-\delta$, we need $m + \lceil \log_2 \delta^{-1} \rceil$ ancilla qubits to store the value of the phase. Noting that $\Real \bra{\varphi_l} \tilde{U} \ket{\varphi_r} = -\cos \alpha$, the estimation error is bounded by
\begin{equation}
\begin{aligned}
    \left| \cos \alpha - \cos \tilde{\alpha} \right| = \left| -2 \sin \frac{\alpha + \tilde{\alpha}}{2} \sin \frac{\alpha - \tilde{\alpha}}{2} \right| \leq \left| \alpha - \tilde{\alpha} \right| \leq 2\pi \varepsilon_2.
\end{aligned}
\end{equation}
The procedure of QPE requires the implementation of QFT, inverse QFT and $G^{2^{j}}$ for $j=0,\cdots,m + \lceil \log_2 \delta^{-1} \rceil$, while $G=R_{\psi_0} R_{\text{good}}$ consists of $\mathcal{O}(n+a+b)$ single Gates and 2-qubit Gates, 2 queries of $U_{\varphi_l}$, $U_{\varphi_r}$, $U$ and their conjugate transpose. Therefore the query complexity of $U_{\varphi_l}$, $U_{\varphi_r}$, $\tilde{U}$ is 
\begin{equation}\label{eqn:complexity:qpe}
    \mathcal{O} \left( \sum_{j=0}^{m + \lceil \log_2 \delta^{-1} \rceil} 2^j \right) = \mathcal{O} \left( 2^{m + \lceil \log_2 \delta^{-1} \rceil} \right) = \mathcal{O} \left( \frac{1}{ \delta \varepsilon_2} \right),
\end{equation}
where $\tilde{U}$ is obtained with $\tilde{O}_{K^{-1}}$ (constructed using $O_{K}$) and $O_{dK}$.

In all, the error of approximating $\bra{\varphi_l} U \ket{\varphi_r}$ is
\begin{equation*}
\begin{aligned}
    \left| - \cos \tilde{\alpha} - \bra{\varphi_l} U \ket{\varphi_r} \right| &\leq \left| -\cos \tilde{\alpha} - (- \cos \alpha) \right| + \left| -\cos \alpha - \bra{\varphi_l} U \ket{\varphi_r} \right| \\
    &\leq 2\pi \varepsilon_2 + \left| \bra{\varphi_l} \tilde{U} \ket{\varphi_r} - \bra{\varphi_l} U \ket{\varphi_r} \right| \\
    &\leq 2\pi \varepsilon_2 + 2 \varepsilon_1.
\end{aligned}
\end{equation*}
To achieve the tolerance
\begin{equation}
   \left| \tilde{\theta}_{t+1} - \theta_{t+1} \right| = \frac{\mu_t}{2C_0^2} \left| - \cos \tilde{\alpha} - \bra{\varphi_l} U \ket{\varphi_r} \right| \leq \varepsilon,
\end{equation}
we choose $\varepsilon_1 = \frac{C_0^2\varepsilon}{2\mu_t}$, $\varepsilon_2 = \frac{C_0^2\varepsilon}{2\mu_t \pi}$. Substituting $\varepsilon_1, \varepsilon_2$ into the query complexity results \eqref{eqn:complexity:inverse} and \eqref{eqn:complexity:qpe} with $C_0=\mathcal{O}(\sigma_N^2)$, the number of queries of $U_{\varphi_l}, U_{\varphi_l}^\dagger, U_{\varphi_r}, U_{\varphi_r}^\dagger, O_{dK}, O_{dK}^\dagger$, $O_{K}, O_{K}^\dagger$, controlled $O_{K}$, and controlled $O_{K}^\dagger$ is obtained.

\end{proof}

Building upon Theorem \ref{thm:complexity:single}, 
we can now obtain the overall complexity of the method across multiple iteration steps, 
as stated in the following theorem. 

\begin{mythm}[multiple steps]
\label{thm:complexity:multi}
Let the task be: to use  the quantum gradient descent method for $T \geq 0$ steps to prepare a solution $\ket{\theta_{T}}$ to final accuracy $\varepsilon^\prime$, with step-size schedule $\{\eta_t\}_{t=0}^{T-1} >0$. There exists a quantum algorithm for multiple gradient steps with a success probability at least $(1-\delta)^T$, that uses $\mathcal{Q}_{1}^\prime$ queries of $U_{\varphi_l}, U_{\varphi_l}^\dagger, U_{\varphi_r}, U_{\varphi_r}^\dagger, O_{dK}, O_{dK}^\dagger$, $\mathcal{Q}_2^\prime$ queries of $O_{K}, O_{K}^\dagger$, controlled $O_{K}$, controlled $O_{K}^\dagger$, and $\mathcal{O}\left(\text{poly}(\mathcal{Q}_{0})(\mathcal{Q}_1^\prime + \text{poly}(n)\mathcal{Q}_{2}^\prime)\right)$ additional $CNOT$ gates and single qubit gates, with
\begin{equation*}
    \mathcal{Q}_{1}^\prime = \mathcal{O}\left(\frac{TC_2\mu}{(1+C_2)^T\delta\sigma_N^4\varepsilon^\prime}\right), \quad \mathcal{Q}_{2}^\prime = \mathcal{O} \left( \frac{TC_2\mu}{(1+C_2)^T\delta \sigma_N^6 \varepsilon^\prime} \log \frac{C_2\mu}{(1+C_2)^T\sigma_N^8 \varepsilon^\prime} \right), \quad \mu = \max_{0\leq t < T} \mu_t.
\end{equation*}
\end{mythm}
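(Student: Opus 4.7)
The plan is to reduce the multi-step analysis to $T$ applications of Theorem~\ref{thm:complexity:single}, combined with an error-propagation recursion for the inexact gradient descent iteration. Concretely, I would let $\theta_t^{\star}$ denote the ideal iterate produced by exact gradient descent from $\theta_0$, and let $\tilde\theta_t$ denote the actual quantum state encoding the binary representation produced by Algorithm~\ref{alg:qgd}. The per-step guarantee of Theorem~\ref{thm:complexity:single} gives $|\tilde\theta_{t+1} - (\tilde\theta_t - \eta_t \nabla L(\tilde\theta_t))| \le \varepsilon$ with success probability at least $1-\delta$, where $L(\theta) = -\log p(\mathbf{y}\mid\mathbf{X},\theta)$ is the negative LML. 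Taking a union bound over all $T$ iterations yields the overall success probability $(1-\delta)^T$.

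For the error propagation, I would introduce a local regularity assumption on the LML, namely that $\nabla L$ is $\mathcal{L}$-Lipschitz on the region traversed by the iterates (this follows from Assumption~1 and standard estimates on $K^{-1}$ and $\partial K/\partial\theta$ together with the bound $\sigma_N^{-2}$ on $\|K^{-1}\|$). With this in hand, a triangle-inequality decomposition gives
\begin{equation*}
|\tilde\theta_{t+1} - \theta_{t+1}^{\star}|
\;\le\; |\tilde\theta_t - \theta_t^{\star}| + \eta_t|\nabla L(\tilde\theta_t) - \nabla L(\theta_t^{\star})| + \varepsilon
\;\le\; (1+\eta_t\mathcal{L})\,e_t + \varepsilon,
\end{equation*}
where $e_t := |\tilde\theta_t - \theta_t^{\star}|$. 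Identifying $C_2$ with $\eta_t\mathcal{L}$ (or using $\eta=\max_t\eta_t$ and the worst case), unrolling this linear recursion from $e_0 = 0$ yields the geometric-series bound $e_T \le \varepsilon\cdot\bigl((1+C_2)^T-1\bigr)/C_2$.

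Given the target final accuracy $\varepsilon^{\prime}$, I would then invert this relation to pick the per-step accuracy budget $\varepsilon$ so that $e_T\le\varepsilon^{\prime}$; substituting this value of $\varepsilon$ back into the per-step complexity formulas $\mathcal{Q}_1$ and $\mathcal{Q}_2$ of Theorem~\ref{thm:complexity:single}, and multiplying by the number of iterations $T$, produces the claimed $\mathcal{Q}_1^{\prime}$ and $\mathcal{Q}_2^{\prime}$. The ancilla-qubit count is the same as for a single step since ancillas are reset between iterations (Step 7 of Algorithm~\ref{alg:qgd}). The success-probability statement $(1-\delta)^T$ is immediate from the product rule over independent Hadamard-test/QPE calls.

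The main obstacle is the error-propagation step: controlling how errors in $\tilde\theta_t$ propagate into errors in the next gradient through the Lipschitz constant of $\nabla L$. Because the recursion amplifies errors geometrically, the per-step budget $\varepsilon$ must shrink by a factor of $(1+C_2)^T/C_2$, which is exactly the source of the $(1+C_2)^T$ dependence appearing in $\mathcal{Q}_1^{\prime}$ and $\mathcal{Q}_2^{\prime}$. A secondary subtlety is that the Lipschitz constant of $\nabla L$ itself depends on $\sigma_N^{-2}$ via bounds on $\|K^{-1}\|$ and $\|\partial K/\partial\theta\|$, so the constants $\sigma_N^{-4},\sigma_N^{-6}$ inherited from Theorem~\ref{thm:complexity:single} remain the correct $\sigma_N$ scaling in the final bounds.
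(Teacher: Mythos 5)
Your proposal follows essentially the same route as the paper's proof: a per-step triangle-inequality decomposition of the error into the fresh single-step error $\varepsilon$ plus the propagated error amplified by $(1+C_2)$ (the paper defines $C_2=\max_{\theta,t}|\eta_t\,\partial^2_\theta\log p(\mathbf{y}|X,\theta)|$, which is exactly your $\eta_t\mathcal{L}$), unrolling the resulting linear recursion into a geometric-series bound, inverting to choose the per-step budget $\varepsilon=\mathcal{O}(C_2\varepsilon'/(1+C_2)^{T})$, and then substituting into Theorem~\ref{thm:complexity:single} and summing over $T$ steps. The argument and the resulting bounds match; the only cosmetic difference is your initial mention of a union bound before correctly invoking the product rule for the $(1-\delta)^T$ success probability.
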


\begin{proof}
    For simplicity, we denote $\tilde{\theta}_{t}$ the approximation of $\theta_{t}$ with error $\varepsilon_t = \left|\tilde{\theta_{t}} - \theta_{t}\right|$ and denote $\theta_{t+1} = \theta_{t} - g(\theta_{t})$ the gradient descent method \eqref{eqn:GD}. The error of $\tilde{\theta_{t+1}}$ can be calculated as
    \begin{equation*}
    \begin{aligned}
        \left| \tilde{\theta}_{t+1} - \theta_{t+1} \right| &\leq \left| \tilde{\theta}_{t+1} - (\tilde{\theta}_{t} - g(\tilde{\theta}_{t})) \right| + \left| (\tilde{\theta}_{t} - g(\tilde{\theta}_{t})) - \theta_{t+1} \right| \\
        &\leq \varepsilon + \left| \tilde{\theta}_{t} - \theta_{t} \right| + \left| g(\tilde{\theta}_{t}) - g(\theta_{t}) \right| \\
        &\leq \varepsilon + (1 + C_{2}) \varepsilon_{t},
    \end{aligned}        
    \end{equation*}
    where $C_{2} = \max_{\theta,t} \left| \eta_t \frac{\partial^2}{\partial\theta^2}\log p(\mathbf{y}|X,\mathbf{\theta}) \right|$ is a constant. By induction, it is easy to calculate 
    \begin{equation}
        \varepsilon_{t+1} \leq (1+C_2)^{t+1} \varepsilon_{0} + \frac{(1+C_2)^{t+1} - 1} {C_2} \varepsilon \leq \frac{(1+C_2)^{t+1}} {C_2} \varepsilon.
    \end{equation}
    To achieve accuracy $\varepsilon^\prime$ at step $T$, we need to choose $\varepsilon = C_2 \varepsilon^\prime / (1+C_2)^{T+1}$. Substituting $\varepsilon$ into the query complexity result in Theorem \ref{thm:complexity:single} and summing over the query complexity of all steps, the desired result follows directly.
\end{proof}

As stated in Theorem \ref{thm:complexity:multi}, the query complexity of the oracles depends on $T$, the number of iterations in the gradient descent method. We have the following remark regarding parameter $T$:
\begin{myremark}
      For a convex objective, classical gradient descent achieves convergence within $\mathcal{O}(1/\varepsilon^\prime)$ iterations, while for a more general objective, the convergence rate is slower, typically of a lower order. In practice, it is often preferable to fix the number of iterations, yielding a near-optimal hyperparameter. Regardless of the scenario, we emphrasize that $T$ does not explicitly depend on the size of the dataset, $N = 2^n$. 
\end{myremark}

In Theorem \ref{thm:complexity:multi} we analyze the query complexity of the proposed algorithm.
Next we shall consider the total time cost  by taking into account
of the cost for oracle access to $O_{K}$, $O_{dK}$.
The analysis is done under the assumption of sparse access oracles and within the framework of block-encoding. 
Specifically, given sparse access to an $s$-sparse matrix $A\in N\times N$ with $N = 2^n$, it is possible to construct an $(s, \log_2(s)+1)$-block-encoding of $A/\left\|A\right\|_{\max}$ \cite{camps2022explicit}. Therefore we have the following corollary:
\begin{mycoro}\label{coro:complexity:total}
    Given sparse accesses to the $s$-sparse kernel matrix $K=K_f + \sigma_N^2 \mathbf{I}$ and its derivative $\frac{dK}{d\theta}$, the $(s, \log_2(s)+1)$-block-encodings $O_K$ and $O_{dK}$ can be constructed. The proposed quantum gradient descent method for $T \geq 0$ steps prepares a solution $\ket{\theta_{T}}$ to final accuracy $\varepsilon^\prime$ in cost
    \begin{equation}
        \tilde{\mathcal{O}} \left( sT\mu\polylog(N)/\delta\sigma_N^6\varepsilon^\prime \right),
    \end{equation}
    with some logarithmic terms  ignored.
\end{mycoro}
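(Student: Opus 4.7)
The plan is to derive this corollary by plugging in the concrete cost of sparse-access oracle calls into the abstract query complexity established in Theorem \ref{thm:complexity:multi}. The only ingredients beyond that theorem are: (i) the known conversion from an $s$-sparse matrix with row/column sparse-access oracles to an $(s,\log_2(s)+1,0)$-block-encoding at polylogarithmic gate overhead per invocation (this is the cited result from Camps et al.), and (ii) careful tracking of how the block-encoding subnormalization $s$ propagates through the QSVT-based matrix inversion $\tilde{O}_{K^{-1}}$.

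The first step is to revisit the proof of Theorem \ref{thm:complexity:single} with general $s$ in place of $s=1$. The inversion complexity bound \eqref{eqn:complexity:inverse} was written as $\mathcal{O}((sa/C_0)\log^2(1/(C_0^2\varepsilon_1)))$; with $s$ no longer normalized to one, each construction of $\tilde{O}_{K^{-1}}$ requires that many queries to $O_K$, $O_K^\dagger$ and their controlled versions. Feeding this into the single-step analysis simply multiplies $\mathcal{Q}_2$ by $s$, leaving $\mathcal{Q}_1$ unchanged since $O_{dK}$ and the state-preparation unitaries $U_{\varphi_l},U_{\varphi_r}$ are called at the outer QPE level. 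Summing over $T$ iterations exactly as in Theorem \ref{thm:complexity:multi} then produces $\mathcal{Q}_2' = \tilde{\mathcal{O}}(sT\mu/(\delta\sigma_N^6\varepsilon'))$ and $\mathcal{Q}_1' = \tilde{\mathcal{O}}(T\mu/(\delta\sigma_N^4\varepsilon'))$, where $\tilde{\mathcal{O}}$ hides logarithmic factors.

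The second step is to charge the actual quantum cost per oracle query. Sparse access to $K$ and $\mathrm{d}K/\mathrm{d}\theta$ (both $s$-sparse) yields block-encodings $O_K$ and $O_{dK}$ whose single invocation costs $\mathcal{O}(\mathrm{polylog}(N))$ elementary gates and a constant number of sparse-access queries, each of which is itself $\mathcal{O}(\mathrm{polylog}(N))$. Combining with the $\text{poly}(n)\mathcal{Q}_2'$ gate overhead already recorded in Theorem \ref{thm:complexity:multi} and noting $n = \log_2 N$, every layer of the circuit contributes at most a $\mathrm{polylog}(N)$ multiplier. The dominant term is $\mathcal{Q}_2'$ because it already carries the $s/\sigma_N^2$ factor from inversion; multiplying it by the $\mathrm{polylog}(N)$ per-query cost yields the announced total $\tilde{\mathcal{O}}(sT\mu\,\mathrm{polylog}(N)/(\delta\sigma_N^6\varepsilon'))$, and $\mathcal{Q}_1'$ is absorbed into the same expression since it lacks the $s/\sigma_N^2$ factor.

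The main obstacle, such as it is, is bookkeeping: keeping straight which factors of $s$, $\sigma_N$, and $\log N$ arise from the block-encoding conversion versus from the QSVT inversion versus from the QPE-based amplitude estimation loop, and verifying that the $(1+C_2)^{-T}$ and $\log$ factors from Theorem \ref{thm:complexity:multi} can be absorbed into the $\tilde{\mathcal{O}}$ notation. I would end the proof by a sentence remarking that the $N$-dependence is confined entirely to the polylogarithmic factor coming from sparse access, which is the whole point of Corollary \ref{coro:complexity:total}.
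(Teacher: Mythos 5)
Your proposal is correct and follows essentially the same route as the paper, whose proof is a one-line citation of exactly the three ingredients you identify: the sparse-access-to-block-encoding construction of Camps et al., the inversion cost in Eq.~\eqref{eqn:complexity:inverse} with the sparsity factor $s$ reinstated, and the multi-step query bound of Theorem~\ref{thm:complexity:multi}. Your version simply makes explicit the bookkeeping (where the factors of $s$, $\sigma_N$, and $\polylog(N)$ enter, and which terms are absorbed into $\tilde{\mathcal{O}}$) that the paper leaves implicit.
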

\begin{proof}
    The corollary is a direct consequence of  
    \cite[Theorem 4.1]{camps2022explicit}, Eq.~\eqref{eqn:complexity:inverse} and Theorem \ref{thm:complexity:multi}.
\end{proof}

\begin{myremark}
    The complexity of block-encoding is determined by the sparsity $s$ of the kernel matrix $K$. When $K$ is sparse and well-conditioned, the proposed quantum algorithm achieves an exponential speed-up compared to classical algorithms. Even in the worst-case scenario, where $K$ is a dense matrix with sparsity $s = N$, the quantum algorithm still offers a cubic speed-up over its classical counterparts.
\end{myremark}

It is also worth mentioning that recent advancements in encoding techniques, such as those introduced in \cite{nguyen2022block,wossnig2018quantum}, enable efficient encoding of dense kernel matrices,
reducing its complexity to $\polylog(N)\kappa$.  
These coding methods can be incorporated with the proposed algorithm,
providing efficient quantum computation for problems with dense kernel matrices.  


\section{Conclusion}
\label{sec:conclusion}
While GPR is a powerful tool for supervised machine learning, 
its classical  implementations face significant computational challenges,
particularly with the $\mathcal{O}(N^3)$ complexity of kernel matrix inversion, limiting scalability to large datasets.
This issue is especially serious when the GP model requires optimization over certain hyperparameters, as the $\mathcal{O}(N^3)$ 
matrix operations must be performed during each iteration of the optimization process.
Quantum computing offers a transformative solution, with quantum algorithms like the HHL method providing exponential speedups for certain key operations. This work presents a quantum algorithm for conducting gradient descent optimization of the GP model over hyperparameters.
A notable feature of the proposed algorithm is that, by encoding the hyperparameter updates $\ket{\theta_t}$ directly into the quantum circuit, the need for repeated measurements at each iteration is eliminated, significantly simplifying the gradient descent procedure. Furthermore, the number of measurements required at the end of the algorithm remains independent of the variance of the log marginal likelihood function or its gradient, ensuring a computationally efficient implementation.

Runtime and error analysis is also provided for the proposed algorithm. We show that, the query complexity of the proposed quantum algorithm depends on the number of gradient descent iterations, the number of hyperparameters, and the specified precision, none of which explicitly depend on the dataset size $N=2^n$. Consequently, the overall query complexity scales polynomially with $n$ and polylogarithmically with  $N$. We also consider the total time cost, which relies on the block-encoding technique employed for the kernel matrices. 
We demonstrate that under the assumption of sparse access oracles
the proposed method can offer at least a cubic speed-up, and, in certain cases, exponential speed-up. 
These findings underscore the substantial computational advantages of quantum algorithms in addressing data intensive problems, enabling broader applications of Gaussian Process Regression. 

Finally we want to discuss some limitations and challenges of the proposed method, which we hope to address in the future. 
Certain kernel structures can result in complex oracle designs, leading to reduced efficiency. Addressing this limitation may require developing more generalized and robust oracle designs. Moreover, the algorithm naturally updates multiple hyperparameters, by computing partial derivatives for each one. This can lead to increased computational costs when the dimensionality of the hyperparameters is high.
We expect that many advanced optimization techniques developed in the classical setting,
such as the randomized coordinate descent method \cite{lu2015complexity}, may be employed to further reduce computational complexity.
These issues will be  investigated in future work.

\section*{Ackownledgement}

SJ and LZ were partially supported by the NSFC grant No. 12031013, the Shanghai Jiao Tong University 2030 Initiative, the Shanghai Municipal Science and Technology Major Project (24LZ1401200), and the Fundamental Research Funds for the Central Universities.

\bibliographystyle{plain}
\bibliography{ref}

\appendix

\section{Construction of $U_{\tilde{\mathbf{y}}_{l}}$ and $U_{\tilde{\mathbf{y}}_{r}}$}
\label{sec:appendix:initial}

For simplicity, we consider the following state
\begin{equation}\label{eqn:state:initial}
\begin{aligned}
    \ket{\tilde{\mathbf{y}}_{\gamma}} &= \sum_{j\in[2^n]} \frac{\cos\gamma}{\sqrt{2^n}} \ket{0}_{i_1} \ket{j}_{i_2} \ket{j}_{w} + \sin \gamma \ket{1}_{i_1}\ket{0^n}_{i_2}\ket{\mathbf{y}}_{w}.
\end{aligned}
\end{equation}
When applied to $\tilde{\mathbf{y}}_{l}$ and $\tilde{\mathbf{y}}_{r}$, we can take $\gamma = \arctan \frac{\left\| \mathbf{y} \right\|}{\sqrt{2^n}}$ and $\gamma = \pi - \arctan \frac{\left\| \mathbf{y} \right\|}{\sqrt{2^n} C_0}$ respectively.

Starting from the initial state $\ket{0}_{i_1}\ket{0^n}_{i_2}\ket{0^n}_{w}$, $RY(2\gamma)$ is applied to the index register $i_1$:
\begin{equation}
    RY(2\gamma) \ket{0}_{i_1} = \cos\gamma \ket{0}_{i_1} + \sin\gamma \ket{1}_{i_1}.
\end{equation}
To construct the first term in Equation \eqref{eqn:state:initial}, the Hadamard gate $H^{\otimes n}$ is applied to the index register $i_{2}$, controlled by the index register $i_{1}$ being in state $\ket{0}$, 
\begin{equation}
    \left[ H^{\otimes n} \right]^{i_1}_{i_2} ( \cos\gamma \ket{0}_{i_1} + \sin\gamma \ket{1}_{i_1}) \ket{0^n}_{i_2}\ket{0^n}_{w} = \sum_{j\in[2^n]} \frac{\cos\gamma}{\sqrt{2^n}} \ket{0}_{i_1} \ket{j}_{i_2} \ket{0^n}_{w} + \sin \gamma \ket{1}_{i_1}\ket{0^n}_{i_2}\ket{0^n}_{w}.
\end{equation}
Then CNOT gates are applied to the index register $i_{2}$ and the working register $w$, which turns the state into
\begin{equation}
    \sum_{j\in[2^n]} \frac{\cos\gamma}{\sqrt{2^n}} \ket{0}_{i_1} \ket{j}_{i_2} \ket{j}_{w} + \sin \gamma \ket{1}_{i_1}\ket{0^n}_{i_2}\ket{0^n}_{w}.
\end{equation}
Finally, the oracle $O_{\mathbf{y}}$ is applied to the working register $w$, controlled by the index register $i_1$ being in state $\ket{1}$, leading to
\begin{equation}
    \sum_{j\in[2^n]} \frac{\cos\gamma}{\sqrt{2^n}} \ket{0}_{i_1} \ket{j}_{i_2} \ket{j}_{w} + \sin \gamma \ket{1}_{i_1}\ket{0^n}_{i_2}\ket{\mathbf{y}}_{w}.
\end{equation}
The entire procedure is illustrated in Figure \ref{fig:circuit:initial}.

\begin{figure}[htbp]
    \centering
    \begin{quantikz}
        \lstick{$\ket{0}_{i_1}$} &  & \gate{RY(2\gamma)} &  \gate{X} & \ctrl{1} & \cdots & \ctrl{2} & \gate{X} & & \cdots  &  &  \ctrl{4} &  \\
        \lstick[2]{$\ket{0^n}_{i_2}$} & \qw{\vdots} &  &  & \gate{H} & \cdots &  &  &  \ctrl{2} & \cdots  &  &  &  \\
        &  &  &  &  & \cdots & \gate{H} &  &  & \cdots & \ctrl{2} &  &  \\
        \lstick[2]{$\ket{0^{n}}_{w}$} & \qw{\vdots}  &  &  &  &  \cdots &  &  & \targ{} &  \cdots  &  &  &   \\
        & & & & & & & & &  \cdots & \targ{} &  \gate{O_{\mathbf{y}}} & 
    \end{quantikz}
    \caption{Quantum circuit for initial state preparation}.
    \label{fig:circuit:initial}
\end{figure}
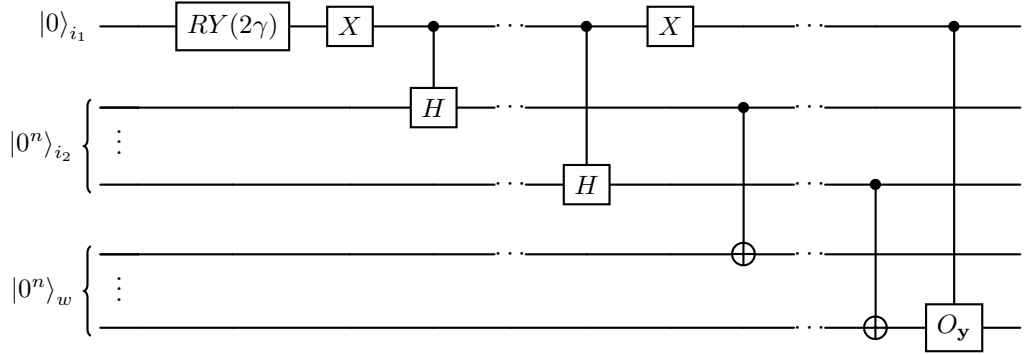

\end{document}